\newcommand{\comment}[1]{}
\newtheorem{theorem}{Theorem}[section]
     \newtheorem{lemma}[theorem]{Lemma}
     \newcommand{\qed}{\nobreak \ifvmode \relax \else
           \ifdim\lastskip<1.5em \hskip-\lastskip
           \hskip1.5em plus0em minus0.5em \fi \nobreak
           \vrule height0.75em width0.5em depth0.25em\fi}
\begin{document}

% paper title
\title{On Coding Efficiency  for Flash Memories}

\author{
\authorblockN{Xudong Ma\\}
\authorblockA{Pattern Technology Lab LLC, U.S.A.  \\
Email: xma@ieee.org}}

% author names and affiliations
% use a multiple column layout for up to three different
% affiliations

\comment{
\author{
\authorblockN{Giuseppe Caire}
\authorblockA{EE Department \\
University of Southern California \\
Los Angeles, CA 90089 \\
caire@usc.edu}
\and
\authorblockN{Marc Fossorier}
\authorblockA{EE Department \\
University of Hawaii at Manoa \\
% 2540 Dole Street \\
Honolulu, HI 96822 \\
marc@spectra.eng.hawaii.edu}
\and
\authorblockN{Andrea Goldsmith}
\authorblockA{Department of Electrical Engineering  \\
Stanford University \\
Stanford, CA \\
Email: andrea@ee.stanford.edu }
\and
\authorblockN{Muriel Medard}
\authorblockA{Laboratory for Information and Decision Systems \\
MIT \\
Cambridge, MA \\
Email: medard@mit.edu}
\and
\authorblockN{Amin Shokrollahi}
\authorblockA{Lab. of Math. Algorithms \\
EPFL\\
Lausanne, Switzerland \\
Email: amin.shokrollahi@epfl.ch}
\and
\authorblockN{Ram Zamir}
\authorblockA{EE - Systems Dprt.\\
Tel Aviv University\\
Tel Aviv, Israel \\
Email: zamir@eng.tau.ac.il } }}

% avoiding spaces at the end of the author lines is not a problem with
% conference papers because we don't use \thanks or \IEEEmembership
% for over three affiliations, or if they all won't fit within the width
% of the page, use this alternative format:
%
%\author{\authorblockN{Michael Shell\authorrefmark{1},
%Homer Simpson\authorrefmark{2},
%James Kirk\authorrefmark{3},
%Montgomery Scott\authorrefmark{3} and
%Eldon Tyrell\authorrefmark{4}}
%\authorblockA{\authorrefmark{1}School of Electrical and Computer Engineering\\
%Georgia Institute of Technology,
%Atlanta, Georgia 30332--0250\\ Email: mshell@ece.gatech.edu}
%\authorblockA{\authorrefmark{2}Twentieth Century Fox, Springfield, USA\\
%Email: homer@thesimpsons.com}
%\authorblockA{\authorrefmark{3}Starfleet Academy, San Francisco, California 96678-2391\\
%Telephone: (800) 555--1212, Fax: (888) 555--1212}
%\authorblockA{\authorrefmark{4}Tyrell Inc., 123 Replicant Street, Los Angeles, California 90210--4321}}

% make the title area
\maketitle

\comment{
\begin{abstract}
This paper provides the instructions for the preparation of papers
for submission to ISIT 2007 and relevant style file that produced
this page.
\end{abstract}

\section{Introduction}
The 2007 IEEE International Symposium on Information Theory will be held
at the Acropolis Congress and Exhibition Center in Nice, France,
from Sunday June 24 through Friday June 29, 2007.

\section{Submission and Review Process}
Papers will be reviewed on the basis of a manuscript of
sufficient detail to permit reasonable evaluation.
The manuscript should
{\bf not exceed five double-column pages,
with single line spacing, main text
font size no smaller than 10 points,
and at least 3/4 inch margins (about 18 mm)}.
The deadline for submission is {\bf January 8, 2007}, with
notification of decisions by {\bf March 25, 2007}.

The deadline and five page limit will be strictly
enforced. In view of the large number of submissions expected,
multiple submissions by the same author will receive especially
stringent scrutiny. All accepted papers will be allowed twenty
minutes for presentation.

\section{Proceedings}
Accepted papers will be published in full (up to
five pages in length) on CD-ROM. A hard-copy book of abstracts will
also be distributed at the Symposium to serve as a guide to the
sessions.

The deadline for the submission of the final camera-ready
paper is {\bf April 20, 2007}. Final manuscript guidelines will be
made available after the notification of decisions.

\section{Preparation of the Paper}
Only electronic submissions in form of a Postscript (PS) or Portable Document
Format (PDF) file will
be accepted. Most authors will prepare their papers with \LaTeX. The
\LaTeX\ style file (\verb#IEEEtran.cls#) and the \LaTeX\ source
(\verb#isit_example.tex#) that produced this page may be downloaded
from the ISIT 2007 web site (http://www.isit2007.org).
Do not change the style file in any way.
Authors using other means to prepare their manuscripts should attempt
to duplicate the style of this example as closely as possible.

% The {\bf Abstract} section should be no more than $250$ words and
% should contain no math notation. If citations are required in the abstract,
% they should be self-contained, e.g. Shannon, \emph{Bell Syst.\ Tech.\ J.} 1948,
% rather than [1]. The abstract will be published separately in the hardcopy book
% of abstracts.

The style of references, e.g.,
\cite{Shannon1948}, equations, figures, tables, etc., should be the
same as for the \emph{IEEE Transactions on Information Theory}. The
affiliation shown for authors should constitute a sufficient mailing
address for persons who wish to write for more details about the
paper.

\section{Electronic Submission}
The paper submission portal is EDAS:
\begin{verbatim}
http://edas.info/newPaper.php?c=5045&
\end{verbatim}
Check ISIT2007 website (http://www.isit2007.org) for relevant
announcements.

\section{Conclusion}
Never conclude a real information theoretic paper.
If you have to, the conclusion goes here.

% conference papers do not normally have an appendix

% use section* for acknowledgement
\section*{Acknowledgment}
% optional entry into table of contents (if used)
%\addcontentsline{toc}{section}{Acknowledgment}
The authors would like to thank various sponsors for supporting their research.
In particular, we thank the TPC chairs of ISIT 2006 for
providing the \LaTeX\ templates for paper submission.

% trigger a \newpage just before the given reference
% number - used to balance the columns on the last page
% adjust value as needed - may need to be readjusted if
% the document is modified later
%\IEEEtriggeratref{8}
% The "triggered" command can be changed if desired:
%\IEEEtriggercmd{\enlargethispage{-5in}}

% references section
% NOTE: BibTeX documentation can be easily obtained at:
% http://www.ctan.org/tex-archive/biblio/bibtex/contrib/doc/

% can use a bibliography generated by BibTeX as a .bbl file
% standard IEEE bibliography style from:
% http://www.ctan.org/tex-archive/macros/latex/contrib/supported/IEEEtran/bibtex
%\bibliographystyle{IEEEtran.bst}
% argument is your BibTeX string definitions and bibliography database(s)
%\bibliography{IEEEabrv,../bib/paper}
%
% <OR> manually copy in the resultant .bbl file
% set second argument of \begin to the number of references
% (used to reserve space for the reference number labels box)
}

\begin{abstract}

Recently, flash memories have become a competitive solution for mass
storage. The flash memories have rather different properties
compared with the rotary hard drives. That is, the writing of flash
memories is constrained, and flash memories can endure only limited
numbers of erases. Therefore, the design goals for the flash memory
systems are quite different from these for other memory systems. In
this paper, we consider the problem of coding efficiency. We define
the ``coding-efficiency'' as the amount of information that one
flash memory cell can be used to record per cost. Because each flash
memory cell can endure a roughly fixed number of erases, the cost of
data recording can be well-defined. We define ``payload'' as the
amount of information that one flash memory cell can represent at a
particular moment. By using information-theoretic arguments, we
prove a coding theorem for achievable coding rates. We prove an
upper and lower bound for coding efficiency. We show in this paper
that there exists a fundamental trade-off between ``payload'' and
``coding efficiency''. The results in this paper may provide useful
insights on the design of  future flash memory systems.

\end{abstract}

\section{Introduction}

Recently, flash memories have become a competitive solution for mass
storage. Compared with the conventional rotary hard drives, flash
memories have high random access read speed, because there is no
mechanical seek time. Flash memory storage devices are also more
lightweight, power efficient, and kinetic shock resistant.
Therefore, they are becoming desirable choices for many applications
ranging from high-speed servers in data centers to portable devices.

Flash memories are one type of solid state memories. Each piece of
flash memory usually contains multiple arrays of flash memory cells.
Each memory cell is a transistor with a floating gate. Information
is recorded using one memory cell by injecting and removing
electrons into and from the floating gate. The process of injecting
electrons is called programming and the process of removing
electrons is called erase. Programming increases the threshold
voltage level of the memory cell, while erase decreases the
threshold voltage level. The threshold voltage level of the memory
cell is the voltage level at the control gate that the transistor
becomes conducting. In the reading process for the memory cell, the
threshold voltage level is detected, thus the recorded information
can be recovered.

The memory cells are organized into pages and then into blocks. The
programming is page-wise and erase is block-wise. Usually, one
memory block is first erased, so that all memory cells within the
block return to an initial threshold voltage level. After the erase
operation, the pages in the block are programmed (possibly multiple
times), until normal threshold voltage level ranges are used up.
Then, the memory block is erased again for further use.

One challenge for flash memories is that the number of erase
operations that one memory cell can withstand is quite limited. For
 current commercial flash memories, such maximal numbers of block
erase operations range from 5,000 to 100,000. After such a limited
number of erase operations, the flash memory cell would become
broken or unreliable. Therefore, data encoding methods must be
carefully designed to address such an issue.

In fact, flash memories can be considered as one type of
write-once-memories. The write-once-memories were first discussed in
the seminal work by Rivest and Shamir \cite{rivest82}. Previous
examples of write-once-memories include digital optical disks,
punched paper tapes, punched cords, and programmable read-only
memories etc. Rivest and Shamir show that by using advanced data
encoding methods, the write-once-memories can be rewritten. In
\cite{rivest82}, one theorem for the achievable data recording rates
of binary write-once-memories has been proven using combinatorial
arguments. During the passed research, many data encoding methods
for rewriting the write-once-memories have been proposed, see for
example, \cite{jinag09} \cite{wu11} etc.

In this paper, we consider a coding efficiency problem for data
encoding on flash memories. Unlike other type of computer memories,
the cost of data encoding can be well-defined for flash memories.
That is, the cost for each erase operation can be defined based on
the cost of the flash memory block and the total number of erase
operations that the memory block can have. The coding efficiency
problem is therefore the problem of recording more information using
fewer erase operations. To our best knowledge, such a design problem
for flash memories has never been discussed before.

We assume that one flash memory block has $N$ cells, and each cell
can take $K$ voltage levels. We assume that the data encoding scheme
uses the memory block for $T$ rounds between two consecutive erase
operations. That is, in the first round, a message ${ M}[1]$ is
recorded using the block, and in the second round, a new message
$M[2]$ is recorded, and so on. Suppose that $Nl_t$ bits are recorded
during the $t$-th round. We define the payload $p$ and coding
efficiency $c$ as
\begin{align}
p = \frac{1}{T} \sum_{t=1}^{T} l_t, \,\,\, c = \frac{\alpha}{K}
\sum_{t=1}^{T} l_t,
\end{align}
where, $\alpha$ is a constant depending on the type of the memory
block, e.g., NOR type, NAND type, single-level-cell,
multi-level-cell etc. The constant $\alpha$ may be used to reflect
the cost for the flash memory block. It should be clear that the
coding efficiency measures the amount of recording information per
voltage level cost. We may also define the voltage level cost per
recorded bit, which is exactly $1/c$.

In this paper, we first prove a coding theorem for achievable rates
of data encoding on flash memories using information-theoretic
arguments.  Using the coding theorem in this paper, we prove an
upper bound for the optimal coding efficiency. We also show a lower
bound of optimal coding efficiency using a specific coding scheme.
Surprisingly, we find that there exists a tradeoff between the
optimal coding efficiency and payload. These results may provide
useful insights and tools for designing future flash memory storage
systems.

The rest of this paper is organized as follows. In Section
\ref{section_coding_theorem}, we present the coding theorem for
achievable coding rates. In Section
\ref{section_entropy_upper_bound}, we show the upper bound of the
optimal coding efficiency. In Section \ref{section_lower_bound}, we
present the lower bound for optimal coding efficiency using a
specific coding scheme. The coding efficiency to payload tradeoff is
discussed in Section \ref{section_tradeoff}. Some concluding remarks
are presented at Section \ref{sec_conclusion}.

\section{Coding Theorem}
\label{section_coding_theorem}

We consider a memory block with $N$ memory cells. Each memory cell
can take $K$ threshold voltage levels, that is, each memory cell can
be at one of the states $0,1,\ldots,K-1$. After one erase operation,
all memory cells are at the state $K-1$. During each programming
process, the state of each cell can be decreased but never
increased. Assume that the memory block can be reliably used for $T$
rounds of information recording, where messages
$M(1),M(2),\ldots,M(t),\ldots,M(T)$ are recorded. We define the
corresponding data rate in the $t$-th round  $l(t) =
\log_2(|M(t)|)/N$, where $|M(t)|$ denote the alphabet size of the
message $M(t)$. In this case, we say that the sequence of data rates
$l(t)$, $t=1,\ldots,T$ is achievable. We assume that all the $T$
messages are statistical independent. We denote the state of the
$n$-th cell in the block during time $t$ by $X_n(t)$. We use the
notation $X_1^N(t)$ to denote the sequence $X_1(t), X_2(t),\ldots,
X_N(t)$. Similarly, $X_1^n(t)$ denotes the sequence $X_1(t),
X_2(t),\ldots, X_n(t)$, where $1\leq n \leq N$. We use $H(\cdot)$ to
denote the entropy and conditional entropy functions as in
\cite{cover}.

\begin{theorem}
\label{main_coding_theorem} A sequence of data rates $l(t)$,
$t=1,\ldots,T$ is achievable, if and only if, there exist random
variables $U(1),\ldots, U(T)$ jointly distributed with a probability
distribution ${\mathbb P}\left(U(1),\ldots, U(T)\right)$, such that,
\begin{align}
& {\mathbb P}\left(U(t)=j | U(t-1)=i\right) = 0,\mbox{ if }j>i,
\mbox{ for } t = 2, \ldots, T, \notag \\
&l(t) \leq {H}\left(U(t) | U(t-1)\right), \mbox{ for }t=2,\ldots,T, \notag \\
&l(1) \leq {H}\left(U(1) \right).
\end{align}
By convention, $U(0) = K-1$ with probability 1.
\end{theorem}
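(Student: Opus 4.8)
\emph{Approach.} I would prove the two directions separately with standard single-letterization tools: a Fano-type converse that extracts the auxiliary variables $U(t)$ by time-sharing over the cell index, and a layered Gelfand--Pinsker / write-once-memory random-coding argument for achievability. Throughout I take each message uniform on its alphabet (so $H(M(t))=N\,l(t)$), read ``achievable'' as vanishing error probability as $N\to\infty$, and may assume the encoders deterministic without loss of generality.

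\emph{Converse.} Suppose the rates $l(t)$ are achievable, so $X_1^N(t)$ is a function of $M(1),\dots,M(t)$ with $X_n(0)=K-1$ and $X_n(t)\le X_n(t-1)$, and $M(t)$ is decodable from $X_1^N(t)$. Since $X_1^N(t-1)$ is a function of $M(1),\dots,M(t-1)$ and the messages are independent, $N\,l(t)=H(M(t))=H(M(t)\mid X_1^N(t-1))$; Fano's inequality then gives $N\,l(t)\le I(M(t);X_1^N(t)\mid X_1^N(t-1))+N\epsilon_N\le H(X_1^N(t)\mid X_1^N(t-1))+N\epsilon_N$, and the chain rule together with ``conditioning reduces entropy'' yields $H(X_1^N(t)\mid X_1^N(t-1))\le\sum_{n=1}^N H(X_n(t)\mid X_n(t-1))$. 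Let $Q$ be uniform on $\{1,\dots,N\}$, independent of everything, and set $U(t):=X_Q(t)$ for $t=0,1,\dots,T$; then the last sum equals $N\,H(X_Q(t)\mid X_Q(t-1),Q)\le N\,H(U(t)\mid U(t-1))$, the case $t=1$ and the convention $U(0)=K-1$ being identical. Because $X_n(t)\le X_n(t-1)$ deterministically, $U(t)\le U(t-1)$ surely, i.e.\ $\mathbb P(U(t)=j\mid U(t-1)=i)=0$ whenever $j>i$. Finally, as the $U(t)$ live on the finite set $\{0,\dots,K-1\}$, compactness of the closed set of admissible joint laws and continuity of entropy let me pass to the limit $N\to\infty$ and fix one tuple $(U(1),\dots,U(T))$ meeting all three conditions.

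\emph{Achievability.} Given such $U(1),\dots,U(T)$, I would build the code round by round using only the consecutive-pair marginals $p_{U(t-1),U(t)}$, treating $l(t)<H(U(t)\mid U(t-1))$ and deducing the boundary by a closure argument. View round $t$ as a noiseless channel whose state is the current cell vector $u^N(t-1)$, which by induction is strongly typical for $p_{U(t-1)}$ (with $u^N(0)=(K-1,\dots,K-1)$ as base case). Draw about $2^{N(l(t)+I(U(t);U(t-1)))}$ sequences i.i.d.\ from $p_{U(t)}$ and split them into $2^{N l(t)}$ bins; to encode $m$, the encoder searches bin $m$ for a codeword jointly strongly typical with $u^N(t-1)$, which exists with high probability by the covering lemma since each bin slightly exceeds $2^{N I(U(t);U(t-1))}$ codewords. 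Because $p_{U(t)\mid U(t-1)}(j\mid i)=0$ for $j>i$, joint strong typicality forces $u_n(t)\le u_n(t-1)$ at every coordinate, so the write is admissible, and $u_n(t-1)\le\cdots\le u_n(0)$ makes consecutive-round consistency sufficient. The decoder reads $u^N(t)$ and returns its bin index, which is unique with high probability exactly because the total codeword count stays below $2^{N H(U(t))}$, i.e.\ because $l(t)<H(U(t)\mid U(t-1))$. Joint typicality of $(u^N(t-1),u^N(t))$ also makes $u^N(t)$ strongly typical for $p_{U(t)}$, closing the induction, and a union bound over the fixed number $T$ of rounds keeps the total error vanishing.

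\emph{Main obstacle.} I expect the real difficulty to lie in the achievability, specifically in reconciling the \emph{hard} write constraint $X_n(t)\le X_n(t-1)$ with typicality-based coding, since generic jointly typical pairs could violate it at a few coordinates. The clean fix is to use a strong (robust) notion of typicality, under which zero-probability symbols never appear, so the support condition is inherited coordinatewise; the remaining work is the bookkeeping of bin sizes so encoding succeeds by covering while decoding stays collision-free, and checking that strong typicality of the state propagates cleanly through all $T$ rounds.
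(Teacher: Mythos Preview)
Your proposal is correct and follows essentially the same route as the paper: the converse is the paper's argument almost verbatim (your time-sharing index $Q$ is the paper's $Z$, with Fano's inequality and a compactness limit added because you work with vanishing rather than zero error), and the achievability is the same random-binning/covering scheme, your i.i.d.-codebook variant differing only cosmetically from the paper's ``throw all typical $U(t)$-sequences into $2^{Nl(t)}$ bins'' construction. You are in fact more careful than the paper on the point you single out as the main obstacle---the paper never explicitly invokes strong typicality to enforce the hard write constraint $X_n(t)\le X_n(t-1)$ coordinatewise, nor does it spell out the round-to-round typicality induction---so your write-up would close gaps the paper leaves implicit.
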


\begin{proof} The achievable part is proven by random binning. For the $t$-th
round of data recording, we construct a random code by throwing
typical sequences of $U(t)$ into $\exp\left\{Nl(t)\right\}$ bins
uniformly in random. The message $m(t)$ is encoded by finding a
sequence $X_1^N(t)$ in the $m(t)$-th bin, such that the sequence
$X_1^N(t)$ is jointly typical with $X_{1}^{N}(t-1)$. If such a
sequence can not be found,  then one encoding error is declared.

Suppose that $l(t)\leq {H}\left(U(t) | U(t-1)\right) - 2\epsilon$,
where $\epsilon$ is an arbitrarily small positive number. Then, the
probability of encoding error can be upper bounded as follows.
\begin{align}
{\mathbb P}(\mbox{error}) & = \left(1-\frac{1}{ \exp(Nl(t))
}\right)^{N_1}
\notag \\
& \stackrel{(a)}{\leq} \exp\left( - \frac{ N_1 }{ \exp(Nl(t))  } \right) \notag \\
& \stackrel{(b)}{\leq} \exp\left( -\frac{
\exp\left(N\left(H(U(t)|U(t-1))-\epsilon\right)\right)
 }{
\exp\left\{N\left(H(U(t)|U(t-1))-2\epsilon\right)\right\}
 } \right) \notag \\
& \leq \exp\left(-\exp(\epsilon N)\right)
\end{align}
where, $N_1$ denotes the number of typical sequences  $X_1^N(t)$
that are jointly typical with $X_{1}^{N}(t-1)$, (a) follows from the
inequality, $(1-x)\leq \exp(-x)$, for $0\leq x < 1$, (b) follows
from the fact that $N_1 \geq \exp\left\{N
\left(H(U(t)|U(t-1))-\epsilon\right)\right\}$. The achievable part
of the proof then follows from the fact that $\epsilon$ can be taken
arbitrarily small.

We prove the converse part by constructing some random variables
$U(1),\ldots,U(T)$, which satisfy the conditions in the theorem.
Assume that there exists at least one coding scheme, which satisfies
the conditions in the theorem.

In the first step, we wish to show
\begin{align}
 H\left(M(t)\right) \leq
H\left(X_{1}^{N}(t)\left|X_{1}^{N}(t-1)\right.\right)
\end{align}
This is because, on the one hand,
\begin{align}
& H\left(M(t),X_{1}^{N}(t)|X_{1}^{N}(t-1)\right) \notag \\
& = H\left(X_{1}^{N}(t)|X_{1}^{N}(t-1)\right) + H\left(M(t) |
X_{1}^{N}(t), X_{1}^{N}(t-1)\right)
\notag \\
& \stackrel{(a)}{=} H\left(X_{1}^{N}(t)|X_{1}^{N}(t-1)\right)
\end{align}
where, (a) follows from the fact that $M(t)$ can be completely
determined by observing $X_{1}^{N}(t)$. On the other hand,
\begin{align}
\label{lemma_entropy_bound_one} &
H\left(M(t),X_{1}^{N}(t)|X_{1}^{N}(t-1)\right) \notag \\
& = H\left(M(t)|X_{1}^{N}(t-1)\right) + H\left( X_{1}^{N}(t)| M(t),
X_{1}^{N}(t-1)\right)
\notag \\
& \stackrel{(a)}{=} H\left(M(t)\right) + H\left( X_{1}^{N}(t)| M(t),
X_{1}^{N}(t-1)\right)
\end{align}
where, (a) follows from the fact that $M(t)$ is independent of
$X_{1}^{N}(t-1)$.

In the second step, we can show that
\begin{align}
H\left(M(t)\right)\leq \sum_{n=1}^{N}
H\left(X_{n}(t)|X_{n}(t-1)\right)
\end{align}
This is because,
\begin{align}
H\left(X_{1}^{N}(t)|X_{1}^{N}(t-1)\right) & = \sum_{n=1}^{N}
H\left(X_n(t) | X_{1}^{n-1}(t), X_{1}^{N}(t-1)\right) \notag \\
& \leq \sum_{n=1}^{N} H\left(X_n(t) | X_{n}(t-1)\right)
\end{align}
where the last inequality follows from the fact that conditions do
not increase entropy.

Let us define random variables $Z, U(1), U(2), ,\ldots, U(T)$ as
follows. The random variable $Z$ takes values in $\{1,2,\ldots,N\}$
uniformly in random.
\begin{align}
U(t) = X_n(t), \mbox{ if }Z=n
\end{align}
The probability distribution of the random variables $Z, U(1), U(2),
,\ldots, U(T)$ can be factored as follows.
\begin{align}
{\mathbb P}(Z) \prod_{t=1}^{T} {\mathbb P}(U(t) | U(1), \ldots,
U(t-1),Z)
\end{align}
It can be checked that
\begin{align}
{\mathbb P}\left(U(t)=j | U(t-1)=i\right) = 0,\mbox{ if }j>i
\end{align}

Finally, we wish to  show that
\begin{align}
Nl(t) = H\left(M(t)\right)\leq NH\left(U(t)|U(t-1)\right)
\end{align}
This is because
\begin{align}
& H(M(t)) \leq \sum_{n=1}^{N} H\left(X_n(t) | X_{n}(t-1)\right)
\notag \\
& \stackrel{(a)}{=} NH\left(U(t)|U(t-1),Z\right) \notag \\
& \stackrel{(b)}{\leq} NH\left(U(t)|U(t-1)\right)
\end{align}
where, (a) follows from the definition of $Z$, (b) follows from the
fact that conditions do not increase entropy.

Therefore, we have constructed the random variables
$U(1),\ldots,U(T)$, which satisfy the conditions in the theorem. The
theorem is proven.

\end{proof}

%%%%%%%%%%%%%%%%%%%%%%%%%%%%%%%%%%%%%%%%%%%%%%%%%%%%%%%%%%%%%%%%%%%%%%%%%%%%%%%%%%%%%%%%%%

\section{Upper Bound}
\label{section_entropy_upper_bound}

In this section, we prove an upper bound for the achievable coding
efficiency. It is clear that the coding efficiency can be calculated
based on the Theorem \ref{main_coding_theorem} by forming an
optimization problem. Let us define a random variable $V(t) = U(t-1)
- U(t)$ with an alphabet $\{0,1,\ldots,K-1\}$. With a given payload
$p$, the optimization problem is as follows.
\begin{align}
\label{original_optimization_problem} & \min_{{\mathbb
P}(V(1),\ldots,V(t), \ldots V(T))}
{\mathbb E}\left[\sum_t V(t)\right]  \\
& \mbox{Subject to: } \sum_t H(V(t)|U(t-1)) \geq  Tp \\
& \hspace{0.5in} {\mathbb P}\left(\sum_t V(t)\geq K \right) = 0
\label{prob_constr}
\end{align}
By convention, $U(0)=K-1$ with probability $1$. It should be clear
that the coding efficiency
\begin{align}
c \leq \frac{\alpha Tp}{\sum_t {\mathbb E}(V(t)^\ast)}
\end{align}
where $V(t)^\ast$ denotes the minimizer of the optimization problem.

However, the above optimization problem is difficult to solve in
closed-form. We will consider instead a relaxed optimization
problem. First, we remove the constraint in Eqn \ref{prob_constr}.
Second, we  relax the constraint $\sum_t H(V(t)|U(t-1)) \geq Tp$ to
$\sum_t H(V(t)) \geq Tp$, due to the fact that conditions do not
increase entropy. Thus, the original optimization problem becomes
\begin{align}
\label{second_optimization_problem} & \min_{{\mathbb
P}(V(1),\ldots,V(t), \ldots V(T))}
{\mathbb E}\left[\sum_t V(t)\right] \notag \\
& \mbox{Subject to: } \sum_t H(V(t)) \geq  Tp
\end{align}

In a final step, because all the constraint and objective functions
only depend on marginal distributions of $V(t)$, we may further
relax the above optimization problem by replacing the joint
distribution \begin{align}{\mathbb P}(V(1),\ldots,V(t), \ldots,
V(T))\end{align} with a set of pseudo marginal distributions
\begin{align}{\mathbb P}(V(1)),\ldots,{\mathbb P}(V(t)), \ldots, {\mathbb
P}(V(T))\end{align} The pseudo marginal distributions may or may not
correspond to a joint distribution. The final relax optimization
problem is thus as follows.
\begin{align}
\label{relax_optimization_problem} & \min_{{\mathbb P}(V(1)),\ldots,
{\mathbb P}(V(t)), \ldots, {\mathbb P}(V(T)))}
{\mathbb E}\left[\sum_t V(t)\right] \notag \\
& \mbox{Subject to: } \sum_t H(V(t)) \geq  Tp
\end{align}

Using the Lagrangian method, we can find that the optimal
distribution for $V(t)$ takes the following form
\begin{align}
\label{opt_dist_form} {\mathbb P}(V(t) = j) = \frac{\exp(-\beta_t
j)}{\sum_{s=0}^{K-1}\exp(-\beta_t s)}
\end{align}
for a certain parameter $\beta_t>0$. Let us define the cost function
$\mbox{cost}(\beta_t)$ and rate function $\mbox{rate}(\beta_t)$ at
the $t$-th data encoding round as follows.
\begin{align}
& \mbox{cost}(\beta_t) = {\mathbb E}\left[V(t)\right], \,\,\,
\mbox{rate}(\beta_t) =  H(V(t))
\end{align}
where $V(t)$ has a probability distribution in Eqn.
\ref{opt_dist_form}. Both the two functions have closed-form
formula,
\begin{align}
& \mbox{cost}(\beta_t)  = \frac{\sum_{j=0}^{K-1}
j\exp(-\beta_t j)}{\sum_{s=0}^{K-1}\exp(-\beta_t s)} \notag \\
& \mbox{rate}(\beta_t)   = \beta_t \mbox{cost}(\beta_t) +
\log\left(\sum_{s=0}^{K-1}\exp(-\beta_t s)\right)
\end{align}

\begin{theorem}
The coding efficiency $c$ is upper bounded by
\begin{align}
c \leq \frac{\alpha \sum_t \mbox{rate}((\beta_t))}{ \sum_t
\mbox{cost}( \beta_t ) }
\end{align}
where, $\beta_t$ corresponds to the solution to the relaxed
optimization problem in Eqn. \ref{relax_optimization_problem}.
\end{theorem}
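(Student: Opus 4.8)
The plan is to chain the inequality $c \le \frac{\alpha T p}{\sum_t {\mathbb E}(V(t)^\ast)}$ already recorded above with a monotonicity argument over the three successive relaxations, and then to use that the entropy constraint is tight at the optimum of Eqn.~\ref{relax_optimization_problem}. For the first link, recall why Eqn.~\ref{original_optimization_problem} governs $c$: given any achievable rate sequence with payload $p$, Theorem~\ref{main_coding_theorem} supplies $U(1),\ldots,U(T)$ with ${\mathbb P}(U(t)=j|U(t-1)=i)=0$ for $j>i$ and $l(t)\le H(U(t)|U(t-1))$. Putting $V(t)=U(t-1)-U(t)$ gives $H(V(t)|U(t-1))=H(U(t)|U(t-1))\ge l(t)$, hence $\sum_t H(V(t)|U(t-1))\ge Tp$, while $\sum_t V(t)=(K-1)-U(T)\le K-1$ with probability one; so this joint law is feasible for Eqn.~\ref{original_optimization_problem} and $\sum_t{\mathbb E}(V(t))\ge\sum_t{\mathbb E}(V(t)^\ast)$. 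Since also $\sum_t{\mathbb E}(V(t))={\mathbb E}[(K-1)-U(T)]<K$ and $c=\alpha Tp/K$, we obtain $c<\alpha Tp/\sum_t{\mathbb E}(V(t)^\ast)$.

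The core step is that each relaxation only enlarges the feasible region, so the optimal value cannot increase along Eqn.~\ref{original_optimization_problem} $\to$ Eqn.~\ref{second_optimization_problem} $\to$ Eqn.~\ref{relax_optimization_problem}: removing the constraint in Eqn.~\ref{prob_constr} drops a restriction; replacing $\sum_t H(V(t)|U(t-1))\ge Tp$ by $\sum_t H(V(t))\ge Tp$ weakens the constraint since conditioning cannot increase entropy; and replacing the joint law by free pseudo-marginals ${\mathbb P}(V(t))$ enlarges the admissible set because both the objective ${\mathbb E}[\sum_t V(t)]$ and the relaxed constraint depend only on the individual marginals, so every feasible joint law maps to a feasible pseudo-marginal tuple with the same objective value. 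Hence $\sum_t\mbox{cost}(\beta_t)$, the value of Eqn.~\ref{relax_optimization_problem}, satisfies $\sum_t\mbox{cost}(\beta_t)\le\sum_t{\mathbb E}(V(t)^\ast)$, and with the previous paragraph this gives $c\le\alpha Tp/\sum_t\mbox{cost}(\beta_t)$.

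To conclude, I would replace $Tp$ by $\sum_t\mbox{rate}(\beta_t)$ in the numerator. The exponential law of Eqn.~\ref{opt_dist_form} comes from the Lagrangian ${\mathbb E}[\sum_t V(t)]-\beta(\sum_t H(V(t))-Tp)$, which is convex (linear objective, concave entropy constraint), by solving the per-$t$ stationarity conditions; complementary slackness then forces the entropy constraint to bind at the optimum (for $p>0$), so $\sum_t\mbox{rate}(\beta_t)=\sum_t H(V(t))=Tp$. Substituting $Tp=\sum_t\mbox{rate}(\beta_t)$ into $c\le\alpha Tp/\sum_t\mbox{cost}(\beta_t)$ yields the claim; even using only feasibility, namely $\sum_t\mbox{rate}(\beta_t)\ge Tp$, the stated inequality still follows, though less tightly.

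The step I expect to be the main obstacle is making the relaxation chain fully rigorous --- especially the pseudo-marginal relaxation, where one must verify that allowing mutually incompatible marginals genuinely cannot raise the optimal value --- together with the convexity/KKT bookkeeping certifying that Eqn.~\ref{opt_dist_form} is the true minimizer and that the entropy constraint is active there. The remaining work (the closed forms for $\mbox{cost}$ and $\mbox{rate}$ and the final substitution) is routine.
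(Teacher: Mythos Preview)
Your proposal is correct and follows essentially the same route as the paper: the paper's proof is the single-sentence observation that relaxing an optimization problem can only loosen the bound, and you have simply spelled out that relaxation chain in full, together with the justification for $c\le \alpha Tp/\sum_t{\mathbb E}(V(t)^\ast)$ and the tightness of the entropy constraint that lets one write $Tp=\sum_t\mbox{rate}(\beta_t)$. Your version is more explicit than the paper's, but the underlying idea is identical.
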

\begin{proof}
The optimal value of a relaxed maximization optimization problem is
greater than or equal to the optimal value of the original
optimization problem.
\end{proof}

In our further discussion, we need to define a stage coding
efficiency function
\begin{align}
f(\beta)=\frac{\mbox{rate}(\beta)}{\mbox{cost}(\beta)}
\end{align}

\begin{lemma}
\label{rate_cost_der_lemma}
\begin{align}
\frac{\mbox{d} (\mbox{rate}(\beta))}{\mbox{d} (\mbox{cost}(\beta))}
= \beta
\end{align}
\end{lemma}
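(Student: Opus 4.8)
The plan is to reduce everything to the partition function $Z(\beta) = \sum_{s=0}^{K-1}\exp(-\beta s)$ and exploit the standard thermodynamic identities relating $\mbox{cost}$ and $\mbox{rate}$ to its logarithmic derivatives. First I would note that, directly from the closed-form expression for $\mbox{cost}(\beta)$,
\begin{align}
\mbox{cost}(\beta) = \frac{\sum_{j=0}^{K-1} j\exp(-\beta j)}{Z(\beta)} = -\frac{Z'(\beta)}{Z(\beta)} = -\frac{\mbox{d}}{\mbox{d}\beta}\log Z(\beta),
\end{align}
and that the given formula for $\mbox{rate}(\beta)$ reads $\mbox{rate}(\beta) = \beta\,\mbox{cost}(\beta) + \log Z(\beta)$.

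Next I would differentiate $\mbox{rate}(\beta)$ with respect to $\beta$ using the product rule on the first term:
\begin{align}
\frac{\mbox{d}\,\mbox{rate}(\beta)}{\mbox{d}\beta} = \mbox{cost}(\beta) + \beta\,\frac{\mbox{d}\,\mbox{cost}(\beta)}{\mbox{d}\beta} + \frac{\mbox{d}}{\mbox{d}\beta}\log Z(\beta).
\end{align}
The crucial cancellation is that the last term equals $-\mbox{cost}(\beta)$ by the identity above, so the first and third terms cancel and we are left with $\mbox{rate}'(\beta) = \beta\,\mbox{cost}'(\beta)$. The claimed identity then follows from the chain rule, writing $\mbox{rate}$ as a function of $\mbox{cost}$ through $\beta$:
\begin{align}
\frac{\mbox{d}\,\mbox{rate}(\beta)}{\mbox{d}\,\mbox{cost}(\beta)} = \frac{\mbox{rate}'(\beta)}{\mbox{cost}'(\beta)} = \beta.
\end{align}

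The only point requiring care — and the main obstacle — is justifying that the division by $\mbox{cost}'(\beta)$ is legitimate, i.e. that $\mbox{cost}(\beta)$ is a genuine change of variable with nonzero derivative. I would handle this by computing $\mbox{cost}'(\beta) = -\mathrm{Var}(V(t))$, the negative variance of $V(t)$ under the Boltzmann distribution in Eqn.~\ref{opt_dist_form}, which is strictly negative whenever $K\geq 2$ and $\beta$ is finite (the distribution is non-degenerate). Hence $\mbox{cost}(\beta)$ is strictly monotone in $\beta$, the inverse function $\beta = \beta(\mbox{cost})$ is well defined and differentiable, and the chain-rule step is valid. This completes the argument.
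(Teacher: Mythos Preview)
Your proposal is correct and follows essentially the same computation as the paper: both differentiate $\mbox{rate}(\beta)=\beta\,\mbox{cost}(\beta)+\log Z(\beta)$ with the product rule, observe that the derivative of $\log Z(\beta)$ is $-\mbox{cost}(\beta)$ so that two terms cancel, and then divide by $\mbox{cost}'(\beta)$ via the chain rule. Your additional care about $\mbox{cost}'(\beta)\neq 0$ is a point the paper does not address inside this lemma but handles separately in Lemma~\ref{cost_decrease_lemma}.
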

\begin{proof}
\begin{align}
& \frac{\mbox{d } \mbox{rate}(\beta)}{\mbox{d } \mbox{cost}(\beta)} \notag \\
& = \frac{ \mbox{cost}(\beta) + \beta \mbox{cost}'(\beta) + \sum_k
-k\exp(-\beta k) / \sum_s \exp(-\beta s)} {\mbox{cost}'(\beta)}
\notag \\
& = \frac{ \mbox{cost}(\beta) + \beta \mbox{cost}'(\beta)
-\mbox{cost}(\beta)} {\mbox{cost}'(\beta)} = \beta
\notag \\
\end{align}
where, the derivatives at the right hand sides are with respect to
$\beta$.
\end{proof}

\begin{lemma}
\label{cost_decrease_lemma} The function $\mbox{cost}(\beta)$ is a
decreasing function with respect to $\beta$.
\end{lemma}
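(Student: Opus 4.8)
The plan is to recognize that $\mbox{cost}(\beta)$ is precisely the mean of the Gibbs distribution in Eqn.~\ref{opt_dist_form}, and to invoke the standard exponential-family identity that relates the derivative of the mean with respect to the natural parameter to a variance. Concretely, write $Z(\beta) = \sum_{s=0}^{K-1}\exp(-\beta s)$ for the partition function, so that the random variable $V(t)$ with ${\mathbb P}(V(t)=j) = \exp(-\beta j)/Z(\beta)$ has
\[
\mbox{cost}(\beta) = {\mathbb E}\left[V(t)\right] = -\frac{\mbox{d}}{\mbox{d}\beta}\log Z(\beta).
\]

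The next step is to differentiate once more. Applying the quotient rule to the closed-form expression for $\mbox{cost}(\beta)$ (or differentiating $-\log Z(\beta)$ twice), one obtains
\[
\frac{\mbox{d}}{\mbox{d}\beta}\,\mbox{cost}(\beta) = -\left({\mathbb E}\left[V(t)^2\right] - \left({\mathbb E}\left[V(t)\right]\right)^2\right),
\]
which is minus the variance of $V(t)$. Equivalently, without the probabilistic language, the sign of the derivative is controlled by
\[
\left(\sum_{j=0}^{K-1} j^2 e^{-\beta j}\right)\!\left(\sum_{s=0}^{K-1} e^{-\beta s}\right) - \left(\sum_{j=0}^{K-1} j e^{-\beta j}\right)^{2} \;\geq\; 0,
\]
which is just the Cauchy--Schwarz inequality applied to the vectors with components $j\,e^{-\beta j/2}$ and $e^{-\beta j/2}$.

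Finally I would conclude: for any nondegenerate cell ($K\geq 2$) the Gibbs distribution puts positive mass on at least two distinct points of $\{0,1,\ldots,K-1\}$, so $V(t)$ is not almost surely constant, the variance is strictly positive, and hence $\frac{\mbox{d}}{\mbox{d}\beta}\mbox{cost}(\beta) < 0$ for every $\beta>0$; in the trivial case $K=1$ the function $\mbox{cost}(\beta)\equiv 0$ is (weakly) decreasing as well. There is no real obstacle here: the only thing to be slightly careful about is the strictness of the Cauchy--Schwarz step and the degenerate $K=1$ case, both of which are disposed of in one line. The argument is essentially the observation that $\mbox{cost}$ is the mean of an exponential family and its derivative is the negative of the corresponding (nonnegative) variance.
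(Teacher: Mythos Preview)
Your argument is correct and ultimately rests on the same Cauchy--Schwarz inequality as the paper's proof, namely
\[
\left(\sum_{k=0}^{K-1} k\,e^{-k\beta}\right)^{2} \leq \left(\sum_{k=0}^{K-1} e^{-k\beta}\right)\left(\sum_{k=0}^{K-1} k^{2}e^{-k\beta}\right).
\]
The only structural difference is where the derivative is taken: the paper differentiates $\log\bigl(\mbox{cost}(\beta)\bigr)$ and obtains
\[
\frac{\mbox{d}}{\mbox{d}\beta}\log\bigl(\mbox{cost}(\beta)\bigr)
= \frac{\sum_k k\,e^{-k\beta}}{\sum_k e^{-k\beta}} - \frac{\sum_k k^{2}e^{-k\beta}}{\sum_k k\,e^{-k\beta}},
\]
whose nonpositivity is exactly the same Cauchy--Schwarz statement; you instead differentiate $\mbox{cost}(\beta)$ itself and identify the result as $-\mbox{Var}(V(t))$. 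Your route is a bit more direct (it avoids the logarithmic detour and supplies the probabilistic interpretation that $\mbox{cost}'(\beta)$ is minus a variance), while the paper's route has the small advantage of not needing to name the variance at all. Your handling of strictness and the degenerate $K=1$ case is also slightly more explicit than the paper's remark that equality holds only as $\beta\to\infty$. In substance the two proofs are the same.
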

\begin{proof}
In order to show that $\mbox{cost}(\beta)$ is a decreasing function,
it is sufficient to show that $\log(\mbox{cost}(\beta))$ is a
decreasing function. The derivative of $\log(\mbox{cost}(\beta))$ is
\begin{align}
\frac{\sum_{k=0}^{K-1}k\exp(-k\beta)}{\sum_{k=0}^{K-1}\exp(-k\beta)}
-
\frac{\sum_{k=0}^{K-1}k^2\exp(-k\beta)}{\sum_{k=0}^{K-1}k\exp(-k\beta)}
\end{align}
By using the Cuachy-Schwarz inequality, we have
\begin{align}
\left[\sum_{k=0}^{K-1}k\exp(-k\beta)\right]^2 \leq
\sum_{k=0}^{K-1}\exp(-k\beta)\sum_{k=0}^{K-1}k^2\exp(-k\beta)
\end{align}
and the equality holds only when $\beta$ goes to infinity. It thus
follows that  the derivative of $\log(\mbox{cost}(\beta))$ is
strictly negative for any finite $\beta$. The lemma follows.
\end{proof}

\begin{lemma}
\label{f_beta_increasing_lemma} The function $f(\beta)$ is an
increasing function with respect to $\beta$.
\end{lemma}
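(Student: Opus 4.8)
The plan is to show $f'(\beta) > 0$ for all finite $\beta > 0$ by a direct computation of the derivative of the ratio $f(\beta) = \mbox{rate}(\beta)/\mbox{cost}(\beta)$, combined with the two facts already established: Lemma \ref{rate_cost_der_lemma}, which gives $\mbox{rate}'(\beta) = \beta\,\mbox{cost}'(\beta)$ (derivatives in $\beta$), and Lemma \ref{cost_decrease_lemma}, which gives $\mbox{cost}'(\beta) < 0$. Writing out the quotient rule,
\begin{align}
f'(\beta) = \frac{\mbox{rate}'(\beta)\,\mbox{cost}(\beta) - \mbox{rate}(\beta)\,\mbox{cost}'(\beta)}{\mbox{cost}(\beta)^2} = \frac{\mbox{cost}'(\beta)\,\bigl[\beta\,\mbox{cost}(\beta) - \mbox{rate}(\beta)\bigr]}{\mbox{cost}(\beta)^2}.
\end{align}
Since $\mbox{cost}(\beta) > 0$ and $\mbox{cost}'(\beta) < 0$, the sign of $f'(\beta)$ is the sign of $\mbox{rate}(\beta) - \beta\,\mbox{cost}(\beta)$. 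From the closed form $\mbox{rate}(\beta) = \beta\,\mbox{cost}(\beta) + \log\bigl(\sum_{s=0}^{K-1}\exp(-\beta s)\bigr)$, this quantity equals $\log\bigl(\sum_{s=0}^{K-1}\exp(-\beta s)\bigr)$.

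So the whole lemma reduces to checking that $\log\bigl(\sum_{s=0}^{K-1}\exp(-\beta s)\bigr) > 0$, i.e. that the partition function $\sum_{s=0}^{K-1}\exp(-\beta s)$ exceeds $1$. This is immediate: the sum has $K \geq 2$ nonnegative terms, and the $s=0$ term alone contributes $1$, while the $s=1$ term contributes $\exp(-\beta) > 0$; hence the sum is strictly greater than $1$ for every finite $\beta$, and its logarithm is strictly positive. Combining, $f'(\beta) = \mbox{cost}'(\beta)\log\bigl(\sum_s \exp(-\beta s)\bigr)/\mbox{cost}(\beta)^2 > 0$ for all finite $\beta > 0$, which proves the claim.

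I expect no serious obstacle here; the only mild subtlety is bookkeeping around the requirement $K \geq 2$ (if $K = 1$ the memory is degenerate and $\mbox{rate} \equiv 0$), and making sure the argument is phrased for finite $\beta$ since Lemma \ref{cost_decrease_lemma} only asserts strict negativity of $\mbox{cost}'$ in that regime. An alternative, essentially equivalent, route would be to interpret $f(\beta)$ as rate-over-cost along the rate–cost tradeoff curve and invoke Lemma \ref{rate_cost_der_lemma} to say the marginal slope $\beta$ is eventually larger than the average slope $f(\beta)$; but the direct derivative computation above is cleaner and self-contained, so that is the one I would write out.
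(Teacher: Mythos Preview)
Your proof is correct and lands on the same computation as the paper's, but it is organized more economically. The paper expands $f'(\beta)$ directly into
\begin{align}
f'(\beta) = \log\Bigl(\textstyle\sum_{k=0}^{K-1}e^{-k\beta}\Bigr)\left\{-1 + \frac{\bigl[\sum_k e^{-k\beta}\bigr]\bigl[\sum_k k^2 e^{-k\beta}\bigr]}{\bigl[\sum_k k e^{-k\beta}\bigr]^2}\right\}
\end{align}
and then invokes Cauchy--Schwarz to show the braced factor is nonnegative. But that braced factor is precisely $-\mbox{cost}'(\beta)/\mbox{cost}(\beta)^2$, so the paper is in effect re-deriving the content of Lemma~\ref{cost_decrease_lemma} inside this proof. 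Your version avoids the duplication by citing Lemmas~\ref{rate_cost_der_lemma} and~\ref{cost_decrease_lemma} directly, reducing everything to the one-line check $\sum_{s=0}^{K-1}e^{-\beta s} > 1$ (a point the paper leaves implicit). The underlying mathematics is identical once unpacked; yours is just better factored.

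One small slip to fix: in your closing sentence the formula should read $f'(\beta) = -\,\mbox{cost}'(\beta)\log\bigl(\sum_s e^{-\beta s}\bigr)/\mbox{cost}(\beta)^2$, with a minus sign, since $\beta\,\mbox{cost}(\beta) - \mbox{rate}(\beta) = -\log\bigl(\sum_s e^{-\beta s}\bigr)$. As written, your displayed expression would be negative. The sign analysis in the sentences preceding it is correct; only the final summary line drops the sign.
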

\begin{proof}
The derivative of $f(\beta)$ is as in Eqn. \ref{der_f_beta}.
\begin{figure*}
\begin{align}
\label{der_f_beta} f'(\beta)  =
 \log\left( \sum_{k=0}^{K-1}\exp(-k\beta) \right)
\left\{ -1 +
 \frac{
\left[\sum_{k=0}^{K-1} \exp(-k\beta)\right]
\left[\sum_{k=0}^{K-1}k^2\exp(-k\beta)\right] }{
\left[\sum_{k=0}^{K-1}k\exp(-k\beta)\right]^2 } \right\}
 \end{align}
 \end{figure*}

The lemma is proven if we can show that
\begin{align}
 \frac{
\left[\sum_{k=0}^{K-1} \exp(-k\beta)\right]
\left[\sum_{k=0}^{K-1}k^2\exp(-k\beta)\right] }{
\left[\sum_{k=0}^{K-1}k\exp(-k\beta)\right]^2 }\geq 1
\end{align}
That is,
\begin{align}
\left[\sum_{k=0}^{K-1}k\exp(-k\beta)\right]^2 \leq
\left[\sum_{k=0}^{K-1} \exp(-k\beta)\right]
\left[\sum_{k=0}^{K-1}k^2\exp(-k\beta)\right]
\end{align}
We can show that this is indeed the case by using the Cuachy-Schwarz
inequality,
\begin{align}
\left(\sum_k \sqrt{x_ky_k}\right)^2 \leq \left(\sum_k
x_k\right)\left(\sum_k y_k\right)
\end{align}
\end{proof}

\begin{theorem}
In the solution to the optimization problem in Eqn.
\ref{relax_optimization_problem},
\begin{align}\beta_1=\beta_2=\ldots = \beta_t = \ldots = \beta_T =
\beta.
\end{align}
Therefore, the coding efficiency
\begin{align}
c \leq \frac{ \alpha \mbox{rate}((\beta))}{  \mbox{cost}( \beta ) }
\end{align}
\end{theorem}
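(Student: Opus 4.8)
The plan is to apply first-order stationarity (Karush--Kuhn--Tucker) reasoning to the relaxed problem of Eqn.~\ref{relax_optimization_problem}, continuing the Lagrangian analysis already used to derive Eqn.~\ref{opt_dist_form}. Since the minimizing pseudo-marginals are known to have the exponential form of Eqn.~\ref{opt_dist_form}, the problem reduces to choosing the scalars $\beta_1,\ldots,\beta_T$ that minimize $\sum_t \mbox{cost}(\beta_t)$ subject to $\sum_t \mbox{rate}(\beta_t) \ge Tp$. I would form the Lagrangian $L(\beta_1,\ldots,\beta_T,\lambda) = \sum_t \mbox{cost}(\beta_t) - \lambda\big(\sum_t \mbox{rate}(\beta_t) - Tp\big)$ and impose $\partial L/\partial \beta_t = 0$ for every $t$.

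The stationarity condition is $\mbox{cost}'(\beta_t) = \lambda\,\mbox{rate}'(\beta_t)$ for each $t$, where primes are derivatives with respect to $\beta_t$. By Lemma~\ref{cost_decrease_lemma}, $\mbox{cost}'(\beta_t)\neq 0$ for finite $\beta_t$, so I may divide to get $\mbox{rate}'(\beta_t)/\mbox{cost}'(\beta_t) = 1/\lambda$, the \emph{same} value for all $t$. But the chain rule identifies $\mbox{rate}'(\beta_t)/\mbox{cost}'(\beta_t)$ with $\mbox{d}(\mbox{rate})/\mbox{d}(\mbox{cost})$, which equals $\beta_t$ by Lemma~\ref{rate_cost_der_lemma}. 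Hence $\beta_t = 1/\lambda$ for every $t$, i.e.\ $\beta_1=\cdots=\beta_T=\beta$. Substituting into the bound of the immediately preceding theorem collapses the sums: $c \le \alpha\, T\,\mbox{rate}(\beta)/\big(T\,\mbox{cost}(\beta)\big) = \alpha\,\mbox{rate}(\beta)/\mbox{cost}(\beta)$, with the common $\beta$ pinned down by the active constraint $\mbox{rate}(\beta)=p$.

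Two points must still be handled. First, I would confirm the rate constraint is active, so that $\lambda>0$ and $1/\lambda$ is a finite, legitimate value of $\beta$: as $\beta\to\infty$ both $\mbox{cost}(\beta)$ and $\mbox{rate}(\beta)$ tend to $0$, so the unconstrained cost minimum cannot meet $\sum_t\mbox{rate}\ge Tp$ for $p>0$, forcing the constraint to bind. Second, and this is where the real work lies, one must show the stationary point is the global minimum, not merely a critical point. The cleanest route is to reparametrize each stage by its rate, $r_t = \mbox{rate}(\beta_t)$: since $\mbox{rate}'(\beta) = \beta\,\mbox{cost}'(\beta) < 0$ for $\beta>0$ (combining the identity in Lemma~\ref{rate_cost_der_lemma} with Lemma~\ref{cost_decrease_lemma}), the map $\beta\mapsto\mbox{rate}(\beta)$ is a strictly decreasing bijection, so cost is a well-defined function $g(r)$ of rate with $g'(r) = 1/\beta(r)$ by Lemma~\ref{rate_cost_der_lemma}. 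As $r$ increases $\beta(r)$ decreases, so $g'$ is strictly increasing and $g$ is strictly convex and increasing. The relaxed problem becomes $\min \sum_t g(r_t)$ subject to $\sum_t r_t \ge Tp$; monotonicity forces $\sum_t r_t = Tp$ at the optimum, and Jensen's inequality gives $\frac{1}{T}\sum_t g(r_t) \ge g\big(\frac{1}{T}\sum_t r_t\big) = g(p)$ with equality precisely when all $r_t$ — equivalently all $\beta_t$ — coincide.

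I expect establishing this strict convexity, and thereby promoting the KKT point to the global optimum, to be the main obstacle; fortunately the monotonicity facts it requires are already packaged in Lemmas~\ref{rate_cost_der_lemma} and~\ref{cost_decrease_lemma}, so the argument should go through without new estimates.
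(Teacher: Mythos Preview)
Your proposal is correct and rests on the same two lemmas the paper uses, but the argument is organized differently. The paper gives a direct exchange argument by contradiction: if $\beta_s>\beta_t$ at the optimum, perturb them so that $\mbox{cost}(\beta_s)$ rises by $\Delta\mbox{cost}$ and $\mbox{cost}(\beta_t)$ falls by the same amount; by Lemma~\ref{rate_cost_der_lemma} the net change in total rate is approximately $(\beta_s-\beta_t)\,\Delta\mbox{cost}>0$ while total cost is unchanged, contradicting optimality. This is essentially the first-order KKT condition you derive, phrased as a local improvement. What you add is the global step: reparametrizing by $r_t=\mbox{rate}(\beta_t)$, showing the induced cost function $g(r)$ is strictly convex and increasing, and invoking Jensen. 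That extra layer is genuinely useful --- the paper's exchange argument only rules out non-equal $\beta_t$ as \emph{stationary} points and implicitly leans on convexity to conclude the equal-$\beta$ point is the minimizer, whereas your Jensen argument makes this explicit. Both routes are short once Lemmas~\ref{rate_cost_der_lemma} and~\ref{cost_decrease_lemma} are in hand; yours is a bit more systematic, the paper's a bit more elementary.
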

\begin{proof}
The theorem is proven by contradiction. Suppose that in the
optimization solution for Eqn. \ref{relax_optimization_problem},
there exist $\beta_s$ and $\beta_t$ such that $\beta_s>\beta_t$.
According to Lemma \ref{cost_decrease_lemma},
$\mbox{cost}(\beta_s)<\mbox{cost}(\beta_t)$. We may modifity
$\beta_s$ and $\beta_t$ slightly into $\beta_s - \Delta \beta_s$ and
$\beta_t + \Delta \beta_t$, such that
\begin{align}
\mbox{cost}(\beta_s - \Delta \beta_s) = \mbox{cost}(\beta_s) +
\Delta \mbox{cost}
\end{align}
\begin{align}
\mbox{cost}(\beta_t + \Delta \beta_t) = \mbox{cost}(\beta_t) -
\Delta \mbox{cost}
\end{align}
where $\Delta \mbox{cost}>0$. Therefore, the total sum of cost
functions remains the same. On the other hand, the rate function
corresponding to $\beta_s$ increases with derivative $\beta_s$, and
the rate function corresponding to $\beta_t$ decreases with
derivative $\beta_t$. The total sum of rate functions increases.
Therefore, $\beta_s$ and $\beta_t$ can not be a part of the
optimization solution. This results in a contradiction. The theorem
is proven.
\end{proof}

%%%%%%%%%%%%%%%%%%%%%%%%%%%%%%%%%%%%%%%%%%%%%%%%%%%%%%%%%%%%%%%%%%%%%%%%%%%%%%%%%%%%%%%%%%

\section{Achievable Lower Bound using Random Coding Arguments}
\label{section_lower_bound}

In this section, we prove a lower bound for the coding efficiency by
using a specific random coding scheme. The data encoding scheme
consists of multiple stages. During all the stages, the cells in the
block are restricted to take one of two states, $k$ or $k-1$, where
$k=1,\ldots,K-1$. Assume in a certain stage, there are $l$ cells
that take the state $k-1$, and the rest  $N-l$ cells take state $k$.
Then, during this stage, the state of only one memory cell is
changed from $k$ to $k-1$ and $l(t)=
\log_2\lfloor(1-\epsilon)(N-l)\rfloor$ bits can be recorded, where
$\lfloor \cdot \rfloor$ denotes the floor function, and $\epsilon$
is a small real number, $0<\epsilon<1$.

The data encoding process is as follows. Let us throw all the
sequences of symbols with length $N$ and alphabet
$\{0,1,2,\ldots,K-1\}$ into $2^{(l(t))}$ bins uniformly in random.
If the to-be-recorded message is $m[t]$, then we check the $m[t]$-th
bin. We try to find one sequence in the bin, such that the current
configuration of the memory cells can be modified to be equal to the
sequence by turning the state of one memory cell $X_n$ from $k$ to
$k-1$. If such a sequence can be found, then we turn the state of
the memory cell $X_n$ from $k$ to $k-1$. If such a sequence can not
be found in the bin, then a decoding error is declared and we
randomly turn one memory cell from $k$ to $k-1$ and go to the next
coding stage.

We assume that the data decoding process knows the random coding
schemes, for example, by sharing the same random source with the
encoder, or using a pseudo random source. In the first step of data
decoding, the decoder can determine the stage of data encoding by
looking at the states of the memory cells and the number $l$ of
cells being at the state $k-1$. The recorded message $m(t)$ can then
be recovered by looking at the bin index of the current
configuration of the memory cells.

The encoding error probability can be bounded as follows,
\begin{align}
& {\mathbb P}(\mbox{error})   \leq
\left[1-\frac{1}{\lfloor(1-\epsilon)(N-l)\rfloor}\right]^{N-l}
\notag \\
& \stackrel{(a)}{\leq} \exp
\left(-\frac{N-l}{\lfloor(1-\epsilon)(N-l)\rfloor}\right)  \leq
\exp\left(\frac{-1}{1-\epsilon}\right)
\end{align}
where, (a) follows from the inequality, $ (1-x)^y \leq \exp(-xy),
\mbox{ for }x\in(0,1), y\geq 0$.

The expected total amount of recoded information between two erase
operations can be bounded as
\begin{align}
& {\mathbb E}(\mbox{rate})  \geq (K-1) \notag \\
& \times \sum_{l=0}^{N}\left[1-\exp\left (\frac{-1}{1-\epsilon}
\right)\right] \log_2\left(\lfloor(1-\epsilon)(N-l)\rfloor\right)
\end{align}
For sufficiently large $N$ and $\epsilon=0.5$, the total expected
recorded information is lower bounded as
\begin{align}
{\mathbb E}(\mbox{rate}) \geq   \frac{(K-1)N}{2} \left[1-\exp\left(
-2 \right)\right] \log\left(N/2\right)
\end{align}
Therefore, the coding efficiency is bounded as follows.
\begin{align}
c \geq   \frac{\alpha}{2} \left[1-\exp\left (-2 \right)\right]
\log\left(N/2\right)
\end{align}
The payload can be calculated as
\begin{align}
p & = \frac{1}{N^2}\sum_{l=0}^{N-1}
\log_2\left(\lfloor(1-\epsilon)(N-l)\rfloor\right)
\end{align}

Based on the above discussions in this section, we arrive at the
following theorem.
\begin{theorem}
The optimal coding efficiency for $K$ level $N$ cell flash memories
can go to infinity as $N$ goes to infinity.
\end{theorem}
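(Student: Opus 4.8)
The plan is to prove the statement constructively: since the optimal coding efficiency is at least the efficiency achieved by any particular coding scheme, it suffices to exhibit one scheme whose coding efficiency grows without bound in $N$. The natural candidate is precisely the multi-stage random coding scheme analyzed above in this section, so the task reduces to assembling the estimates already developed into the conclusion $c\to\infty$.

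First I would set up the staging. After an erase all $N$ cells are at state $K-1$; the scheme restricts each stage to cells in two adjacent states $k$ and $k-1$ and moves exactly one cell from $k$ to $k-1$ per stage, so descending level by level it runs for exactly $(K-1)N$ stages before the block is exhausted. In the stage in which $l$ cells already hold the lower state, the random-binning encoder partitions the length-$N$ sequences into $2^{l(t)}$ bins with $l(t)=\log_2\lfloor(1-\epsilon)(N-l)\rfloor$, and on success records $Nl(t)=\log_2\lfloor(1-\epsilon)(N-l)\rfloor$ bits.

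Second --- and this is the only real subtlety --- the per-stage encoding-error probability is \emph{not} forced to zero; using $(1-x)^y\le\exp(-xy)$ it is bounded by the \emph{constant} $\exp(-1/(1-\epsilon))<1$, uniformly in $l$, $k$, and $N$. The point is that this constant loss is harmless: on an error the encoder simply flips an arbitrary eligible cell and proceeds, so the stage count is unchanged, and the decoder still recovers the stage from the count $l$ of low-state cells and the message from the bin index of the current configuration whenever there was no error. Hence the expected number of bits recorded in the stage with parameter $l$ is at least $\left[1-\exp(-1/(1-\epsilon))\right]\log_2\lfloor(1-\epsilon)(N-l)\rfloor$.

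Third, I would sum the per-stage contributions over all $(K-1)N$ stages and lower-bound the inner sum $\sum_{l}\log_2\lfloor(1-\epsilon)(N-l)\rfloor$ by keeping only the $l\le N/2$ terms, each of which is at least $\log_2\lfloor(1-\epsilon)N/2\rfloor$. Taking $\epsilon=1/2$ this gives ${\mathbb E}(\mbox{rate})\ge \tfrac{(K-1)N}{2}\left[1-\exp(-2)\right]\log(N/2)$ for $N$ large; dividing by the voltage-level cost in the definition of $c$ yields $c\ge \tfrac{\alpha}{2}\left[1-\exp(-2)\right]\log(N/2)\to\infty$ as $N\to\infty$. I expect the main obstacle to be exactly the bookkeeping in the second step --- verifying that declaring an error and flipping an arbitrary cell leaves the decoder's stage-and-bin recovery intact and that the surviving factor is bounded away from zero independently of all parameters --- after which the floor-function asymptotics and the partial-sum bound are routine. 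As a side remark, the same computation shows the payload $p$ of this scheme is only of order $(\log N)/N$, hinting at the efficiency--payload tradeoff treated later.
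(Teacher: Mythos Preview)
Your proposal is correct and follows essentially the same construction and estimates as the paper: the same multi-stage one-cell-per-stage random-binning scheme, the same $(1-x)^y\le e^{-xy}$ bound yielding a constant per-stage success probability, the choice $\epsilon=1/2$, and the resulting lower bound $c\ge \tfrac{\alpha}{2}[1-e^{-2}]\log(N/2)\to\infty$. Your bookkeeping is in fact slightly more careful than the paper's, since you explicitly drop the $l>N/2$ terms when lower-bounding the sum, which avoids the degenerate stages where $\lfloor(1-\epsilon)(N-l)\rfloor\le 1$.
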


%%%%%%%%%%%%%%%%%%%%%%%%%%%%%%%%%%%%%%%%%%%%%%%%%%%%%%%%%%%%%%%%%%%%%%%%%%%%%%%%%%%%%%%%%%

\section{The Coding-Efficiency-to-Payload Tradeoff}

\label{section_tradeoff}

Some important insights can be gained from the upper and lower
bounds for coding efficiency proved in the previous sections. From
the upper bound, it can be seen clearly that the coding efficiency
decreases as the payload increases. From the lower bound, it can be
seen that the coding efficiency may go to infinity as the payload
decreases to zero. Therefore, we can conclude that there exists a
tradeoff between the coding efficiency and payload. The tradeoff is
illustrated in Fig. \ref{upper_lower_bound_fig}. In the figure, the
upper and lower bound for coding efficiency are shown, where the
x-axis shows the payload. We assume $\alpha=1$, and the flash
memories are 8-level (3bit) TLC type flash memories.

\begin{figure}[h]
 \centering
 \includegraphics[width=3in]{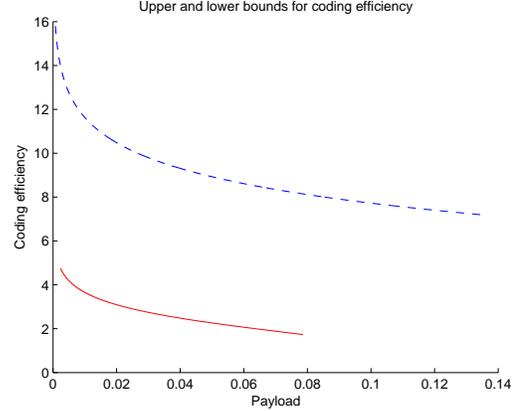}
 \caption{Upper and lower bounds for coding efficiency of 3-bit flash memory cells.}
 \label{upper_lower_bound_fig}
\end{figure}

%%%%%%%%%%%%%%%%%%%%%%%%%%%%%%%%%%%

\section{Conclusion}

\label{sec_conclusion}

In this paper, we study the coding efficiency problem for flash
memories. A coding theorem for achievable rates is proven. We prove
an upper and lower bounds for the coding efficiency. We show that
there exists a tradeoff between the coding efficiency and payload.
Our discussions in this paper provide useful insights on the design
of future flash memory systems.

%\nocite{*}
\bibliographystyle{IEEEtran}
\bibliography{flash_memory_main_bib}

\end{document}